\pgfplotsset{compat=newest}
\definecolor{darkgreen}{rgb}{0, 0.5, 0} 
\definecolor{lightpurple}{rgb}{0.7, 0.4, 1} 
\definecolor{orcidlogocol}{HTML}{A6CE39}
\tikzset{
orcidlogo/.pic={
\fill[orcidlogocol] svg{M256,128c0,70.7-57.3,128-128,128C57.3,256,0,198.7,0,128C0,57.3,57.3,0,128,0C198.7,0,256,57.3,256,128z};
\fill[white] svg{M86.3,186.2H70.9V79.1h15.4v48.4V186.2z}
svg{M108.9,79.1h41.6c39.6,0,57,28.3,57,53.6c0,27.5-21.5,53.6-56.8,53.6h-41.8V79.1z M124.3,172.4h24.5c34.9,0,42.9-26.5,42.9-39.7c0-21.5-13.7-39.7-43.7-39.7h-23.7V172.4z}
svg{M88.7,56.8c0,5.5-4.5,10.1-10.1,10.1c-5.6,0-10.1-4.6-10.1-10.1c0-5.6,4.5-10.1,10.1-10.1C84.2,46.7,88.7,51.3,88.7,56.8z};
}
}
\newcommand\orcidicon[1]{\href{https://orcid.org/#1}{\mbox{\scalerel*{
\begin{tikzpicture}[yscale=-1,transform shape]
\pic{orcidlogo};
\end{tikzpicture}
}{|}}}}
\newtheorem{proof}{Proof}
\newtheorem{proposition}{Proposition}
\begin{document} 
\author{
$
\text{Jalal Jalali}^{\orcidicon{0000-0002-3609-6775}}\ \IEEEmembership{Member, IEEE}, 
\text{Rodrigo C. de Lamare}^{\orcidicon{0000-0003-2322-6451}}\ \IEEEmembership{Senior Member, IEEE}
$.
}

\author{
\IEEEauthorblockN{
~Jalal Jalali\IEEEauthorrefmark{2},
~Mostafa Darabi\IEEEauthorrefmark{4}
\IEEEauthorrefmark{5},
and
Rodrigo C. de Lamare\IEEEauthorrefmark{2}\IEEEauthorrefmark{3}}
\IEEEauthorblockA{\IEEEauthorrefmark{2}A Centre for Telecommunications Studies, Pontifical Catholic University of Rio de Janeiro, Brazil}
\IEEEauthorblockA{\IEEEauthorrefmark{4}Department of ECE, University of British Columbia, Vancouver, BC V6T 1Z4, Canada\\}
\IEEEauthorblockA{\IEEEauthorrefmark{5}Wireless Communication Research Group, JuliaSpace Inc., Chicago, IL, USA\\}
\IEEEauthorblockA{\IEEEauthorrefmark{3} University of York, United Kingdom\\}
Emails: 
\href{mailto:josh@juliaspace.com}
{\texttt{josh@juliaspace.com}},
\href{mailto:mostafadarabi@ece.ubc.ca}
{\texttt{mostafadarabi@ece.ubc.ca}},
\href{mailto:delamare@puc-rio.br}
{\texttt{delamare@puc-rio.br}}
\vspace{-4mm}
}


\title{\huge Shape Adaptive Reconfigurable Holographic Surfaces}

\maketitle
 
\begin{abstract}
Reconfigurable Intelligent Surfaces (RIS) have emerged as a key solution to dynamically adjust wireless propagation by tuning the reflection coefficients of large arrays of passive elements. Reconfigurable Holographic Surfaces (RHS) build on the same foundation as RIS but extend it by employing holographic principles for finer-grained wave manipulation | that is, applying higher spatial control over the reflected signals for more precise beam steering. In this paper, we investigate shape-adaptive RHS deployments in a multi-user network. Rather than treating each RHS as a uniform reflecting surface, we propose a selective element activation strategy that dynamically adapts the spatial arrangement of deployed RHS regions to a subset of predefined shapes. In particular, we formulate a system throughput maximization problem that optimizes the shape of the selected RHS elements, active beamforming at the access point (AP), and passive beamforming at the RHS to enhance coverage and mitigate signal blockage. The resulting problem is non-convex and becomes even more challenging to solve as the number of RHS and users increases; to tackle this, we introduce an alternating optimization (AO) approach that efficiently finds near-optimal solutions irrespective of the number or spatial configuration of RHS. Numerical results demonstrate that shape adaptation enables more efficient resource distribution, enhancing the effectiveness of multi-RHS deployments as the network scales.
\end{abstract}
\begin{IEEEkeywords}
Intelligent reflecting surfaces (RIS), reconfigurable holographic surfaces (RHS), shape adaptation, passive/ active beamforming.
\end{IEEEkeywords}

\vspace{-2mm}

\section{Introduction}
\vspace{-3mm}
\indent 

{In the evolution towards 6th generation (6G) networks, achieving high data rates remains a fundamental requirement for next-generation cellular systems. However, traditional approaches to meeting this demand often involve increased hardware complexity and energy consumption, particularly in systems that rely on large-scale antenna arrays for high-gain beamforming~\cite{6736746}. This has motivated the search for alternative, more efficient solutions that can enhance wireless performance without significantly increasing infrastructure costs.}

{Among these solutions, Reconfigurable Intelligent Surfaces (RIS) have gained significant attention in both academia and industry as a promising approach to improving wireless network efficiency in a cost-effective manner \cite{9371019, Jalali2024IRS}. RIS technology enables the dynamic reconfiguration of the wireless propagation environment, enhancing communication capabilities \cite{9326394}. Unlike traditional relays, RIS consists of a large array of low-cost passive reflecting elements that can intelligently adjust signal phases without requiring power-intensive radio frequency (RF) chains \cite{Lemic2025RIS}. By dynamically altering the reflection pattern, RIS can not only mitigate signal degradation but also optimize signal propagation paths to improve link reliability and throughput.}


To fully unlock the potential of RIS-assisted wireless networks, researchers have explored various optimization strategies for reflection coefficient tuning, beamforming, and RIS deployment. 
For instance, Souza \emph{et al.}~\cite{10639083} studied energy efficiency (EE) maximization in multi-user (MU) massive multiple-input multiple-output (mMIMO) systems by jointly optimizing base station antennas and RIS elements. Their reinforcement-learning-based method showed a 40.3\% improvement in EE over traditional optimization approaches. Similarly, Hou \emph{et al.}~\cite{10139787} investigated joint RIS selection and beamforming in multi-RIS unmanned aerial vehicle (UAV)-assisted anti-jamming networks, proposing a distributed matching-based selection and Q-learning-based beamforming to enhance network robustness.

Other works have focused on active element selection in RIS panels to improve transmission efficiency. Mu \emph{et al.}~\cite{10373958} introduced an active element selection (AES) algorithm for hybrid RIS-assisted device-to-device (D2D) networks, where a subset of RIS elements is activated for power amplification while others remain passive. Their optimization strategy significantly improved throughput performance while reducing energy consumption. In the context of multi-RIS-aided UAV networks, Bansal \emph{et al.}~\cite{10129204} addressed RIS selection under imperfect and outdated channel state information (CSI). Their work derived statistical models for selection probability, outage probability, and bit error rate (BER), showing that selecting fewer, larger RIS can outperform selecting many smaller ones.

\begin{figure*}
\vspace{-5mm}
    \centering
    \includegraphics[width=0.99\linewidth]{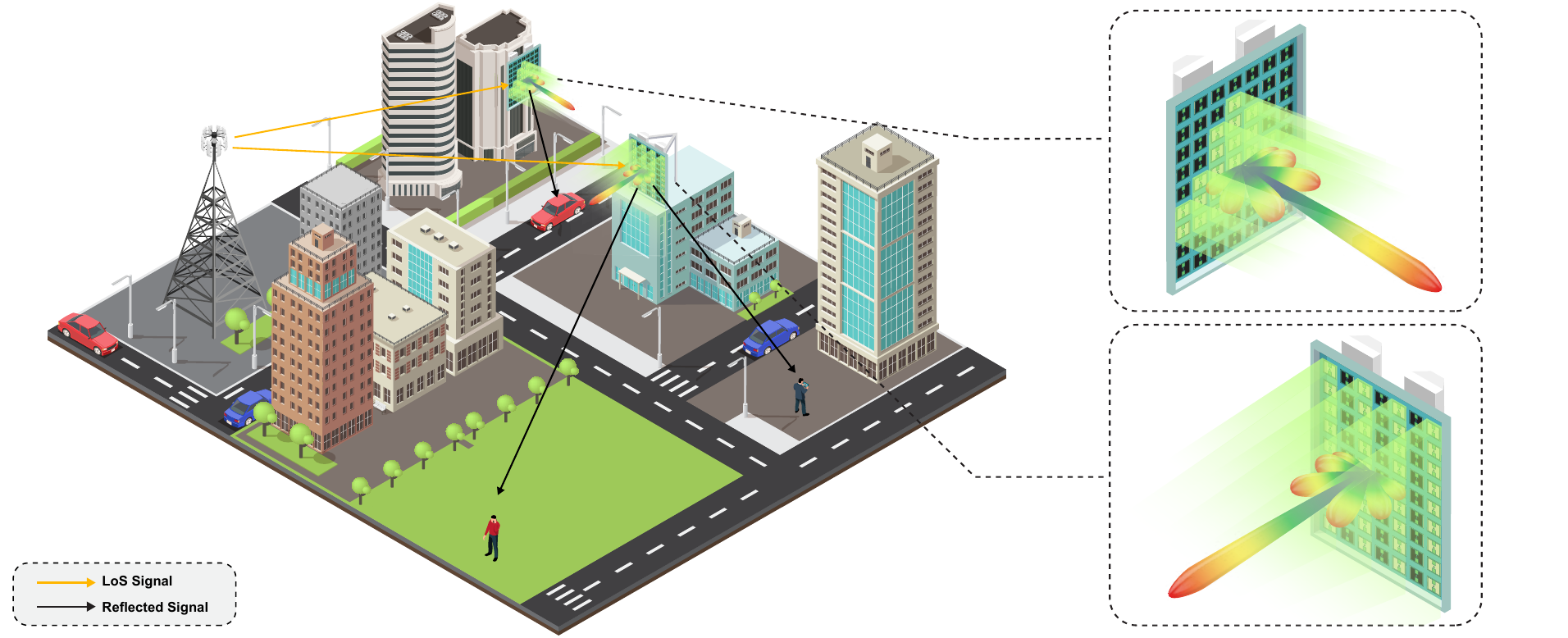}
    \vspace{-1mm}
    \caption{Shape-adaptive multi-RHS-assisted network architecture. The figure illustrates a downlink communication system where an AP, equipped with a ULA, serves multiple UEs through multiple RHS. Each RHS is composed of a planar array of elements, with different activation patterns optimized for network performance. The left RHS follows a rectangular activation shape, while the middle RHS utilizes a column-wise striped activation shape. These shape variations influence the reflective beamforming and impact signal propagation. Elements that do not constitute a given shape within an RHS can be repurposed for other functionalists, e.g., energy harvesting or sensing, while the utilized elements remain dedicated to communication~\cite{song2024miniature}.}
    \label{fig:sysmodel}
    \vspace{-5mm}
\end{figure*}

Beyond traditional network optimization, Duan \emph{et al.}~\cite{10433701} explored the role of active RIS-assisted power and element allocation in enhancing physical layer security (PLS) for AI-generated content (AIGC). Their method effectively mitigates the multiplicative fading effect of passive RIS systems, improving secrecy capacity through optimized power allocation. Additionally, Fang \emph{et al.}~\cite{9837009} proposed location-based RIS selection strategies for maximizing signal-to-noise ratio (SNR), demonstrating that optimized RIS selection policies significantly improve network performance. Moreover, Jalali \emph{et al.}~\cite{10872813} investigated placement, orientation, and resource allocation optimization for cell-free optical RIS (ORIS)-aided optical wireless communication (OWC) networks, using a multi-objective optimization problem (MOOP) approach to jointly improve spectral efficiency (SE) and EE while accounting for ORIS placement and user orientation. 

While prior research has extensively investigated RIS reflection coefficient optimization, active/passive element selection, and placement strategies~\cite{10639083,10139787,10373958,10129204,10433701,9837009,10872813}, these works have generally treated RIS panels as static structures. An underexplored yet critical component of RIS design is \textit{shape adaptation}, where the geometry of the active elements can be dynamically reconfigured. Existing approaches typically focus on selecting which elements to activate on a single RIS panel or choosing which RIS to deploy in multi-RIS scenarios. However, such methods overlook the potential performance gains from adjusting the shape of the deployed elements.

{In this work, we introduce the concept of \textit{shape adaptation using Reconfigurable Holographic Surfaces (RHS)} | a next-generation metasurface technology that extends beyond conventional RIS. Unlike standard RIS, which operate by reflecting signals from external transmitters, RHS incorporate an integrated feed directly into the metasurface. This design exploits a leaky-wave mechanism to radiate signals, rather than merely reflecting them, leading to a thinner form factor and more flexible beam synthesis capabilities~\cite{song2024miniature}. Because each element of an RHS can be selectively activated or adapted in shape, it can dynamically tailor both the amplitude and distribution of outgoing signals. In contrast, conventional RIS merely apply phase shifts to an incident wave from an external feed.}
We aim to maximize the downlink throughput in a multi-RHS-assisted network by exploiting shape-adaptive RHS deployment. The considered problem is non-linear and non-convex, becoming increasingly challenging to solve as the number of RHS and users grows. To address this, we propose an alternating optimization (AO) approach that efficiently finds near-optimal solutions regardless of the number or spatial configuration of RHS. To achieve this, we decompose the problem into three subproblems. The first subproblem is solved by searching for the optimal channel gain across all predefined RHS shapes, while the second and third subproblems optimize the active beamforming at the access point (AP) and the passive beamforming at the RHS, respectively.

The main contributions of our work are:
\begin{itemize}  
\item We propose a shape-adaptive multi-RHS deployment that dynamically selects utilized RHS regions to optimize signal propagation and mitigate blockage effects. 
\item Based on the proposed shape-adaptive RHS framework, we formulate a network throughput optimization problem that integrates shape adaptation, active beamforming at AP, and passive beamforming at RHS.   
\item To efficiently solve this inherently non-convex problem, we develop an AO approach, where the original problem is decomposed into three subproblems. 
\item To validate our approach, we conduct simulations, demonstrating that shape-adaptive RHS deployments significantly improve throughput performance. 
\end{itemize}

\textit{Notations:}
Vectors and matrices are denoted by boldface lowercase (e.g., $\mathbf{x}$) and uppercase (e.g., $\mathbf{X}$) letters, respectively.
The operators $(\cdot)^{*}$, $(\cdot)^{\mathrm{T}}$, and $(\cdot)^{\mathrm{H}}$ represent the conjugate, transpose, and Hermitian transpose, respectively.
Hadamard product is written as $\mathbf{A} \circ \mathbf{B}$.
The functions $\mathrm{tr}(\cdot)$, $\mathrm{vec}(\cdot)$, and $\mathrm{diag}(\cdot)$ denote the trace, vectorization, and diagonalization operators, while $\Re(\cdot)$ denotes the real part of a complex value.

\section{System Model}
Figure~1 illustrates the downlink of an RHS-assisted communication system. In this setup, an AP, equipped with a uniform linear array (ULA) consisting of $N_{\textrm{tr}}$ antenna elements, transmits signals to user equipment (UE) through multiple RHS panels. Each RHS employs a distinct activation shape to optimize network performance.
The system features $S$ RHS units, each tasked with serving $K$ single antenna UES. The set of RHS and UEs are denoted by $\mathcal{S} =\{1,\dots,S\}$ and $\mathcal{K} =\{1,\dots,K\}$, respectively. 
The AP’s communication is enhanced through passive RHS, each containing $M$ reflecting elements, where a controller adjusts the phase shifts.
Due to environmental obstacles, the line-of-sight (LoS) link between the AP and the UEs is assumed to be blocked, meaning that communication occurs exclusively through RHS-assisted signal reflection. To overcome these challenges, we aim to \textit{dynamically adapt the shape of each RHS} by selectively activating specific reflecting elements. This adaptation optimizes signal propagation and maximizes the overall network data rate.
Accordingly, for the $k$-th UE, the received signal can be formulated as follows:
\begin{align} 
{y_k} &= \nonumber
\sum\limits_{s \in \mathcal{S}} \mathbf{g}_{s,k}^{\textrm{H}} 
\left(\boldsymbol{A} \circ
\mathbf{\Phi}_s^{\textrm{H}}\right)
\mathbf{H}_s
\mathbf{w}_k
b_k 
\\&+
\underbrace{
\sum\limits_{k' \in \mathcal{K} k'\neq k } 
\sum\limits_{s \in \mathcal{S}} \mathbf{g}_{s,k'}^{\textrm{H}} 
\left(\boldsymbol{A} \circ
\mathbf{\Phi}_s^{\textrm{H}}\right)
\mathbf{H}_s
\mathbf{w}_{k'}
b_{k'}
}_{\text{interference$\triangleq i_k$}}
+ 
\underbrace{{n_k}}_{\text{noise}}, 
\end{align}
where $\mathbf{H}_s \!\in\! \mathbb{C}^{M \times N_{\textrm{tr}}}$ represents the channel between the AP and the $s^{\text{th}}$ RHS unit and $\mathbf{g}_{s,k} \!\in\! \mathbb{C}^{M \times 1}$ is the channel connecting the $s^{\text{th}}$ RHS unit to the $k^{\text{th}}$ UE. 
The matrix $\boldsymbol{A} \!\!\in\!\! \mathcal{A} \!\subset\!\! \mathbb{C}^{M \times M}$ is a binary diagonal shaping matrix that dynamically adapts based on the channel to optimize RHS-assisted communication. Unlike a fixed structure, $\boldsymbol{A}$ is selected from a predefined set $\mathcal{A}$, which includes specific RHS configurations, e.g., rectangular, strip-line or circular shapes. 
The phase shift matrix corresponding to the $s^{\text{th}}$ RHS unit is given by $
\boldsymbol{\Phi}_s \!=\! \eta^{-\frac{1}{2}} \textrm{diag}([e^{j\phi_{s,1}}, \dots, e^{j\phi_{s,M}}]^T),$
where $\eta$ is the reflection coefficient and $\phi_{s,m} \in [0, 2\pi)$, $\forall s \in \mathcal{S},~ m \in \mathcal{M}=\{1,...,M\}$ is the phase shift associated with the $m^{\text{th}}$ element of the $s^{\text{th}}$ RHS.
For the $k^{\text{th}}$ UE, the transmitted signal from the AP $s_k$ has normalized power and zero mean. 
{For simplicity, the combined interference and noise term, $\xi_k\!=\!i_k\!+\!n_k$ is modeled as a circularly symmetric complex Gaussian random variable, i.e., $\mathcal{CN}(0,\! \sigma^2_{\xi})$~\cite{7051266}.}
Lastly, the active beamforming matrix in the multi-RHS network can be rewritten as $\mathbf{W} \!\!=\!\! [\mathbf{w}_1, \dots, \mathbf{w}_K]$, where each precoding vector $\mathbf{w}_k \!\!\in\! \mathbb{C}^{N \times 1}$ is used by the AP for transmitting the signal~$b_k$. 

\section{Problem Formulation and Transformation}
Unlike conventional approaches that assume a fully utilized RHS surface, our method selectively activates RHS elements to form optimized reflection patterns based on a subset of predefined shapes. This selective activation enhances signal propagation and coverage while reducing unnecessary energy consumption. Furthermore, instead of focusing solely on beamforming optimization, we introduce shape adaptation as a key design variable to improve spectral efficiency and network performance\footnote{While throughput maximization is our primary objective, extending the formulation to incorporate interference-aware constraints and SINR optimization could provide a more comprehensive evaluation of system performance, particularly in multi-user scenarios with significant co-channel interference.}
.
The signal-to-noise ratio (SNR) experienced by the $k$-th UE can be expressed as follows:
\begin{equation}{\gamma_{k}} = 
{| {\sum\limits_{s \in \mathcal{S}} {\mathbf{g}_{s,k}^{\textrm{H}}} 
\left(\boldsymbol{A} \circ\mathbf{\Phi}_s^{\textrm{H}}\right)
{{\mathbf{H}_s}}{{\mathbf{w}}_k}} |}^2 \big/
\sigma_{\xi}^2,
\forall k.
\label{SNIR}
\end{equation}
The SNR expression quantifies the received signal quality at each UE, influencing achievable data rates. Based on this, we formulate the throughput optimization problem as follows:
\begin{subequations} 
\begin{align} 
\text{P}_1&: 
\mathop {\max}
\limits_{\boldsymbol{A},\mathbf{W},\mathbf{\Phi}_s} 
\sum\limits_{k = 1}^K 
\omega_k 
{\log_2}
\left( {1 + {\gamma_{k}}} \right)
\nonumber
\\ 
s.t.&:~
\quad \operatorname{tr}
\left( {{\mathbf{W}}{{\mathbf{W}}^{\text{H}}}} \right) \leq p_{\max},
\label{p1_c1}
\\ 
&\quad\quad
\big|
\left(\boldsymbol{A} \circ\mathbf{\Phi}_s^{\textrm{H}}\right)
\mathbf{H}_s
\mathbf{g}_{s,k}^{\textrm{H}}
\mathbf{w}_k
\big|^2
\geq p_{s}^{thr},
\forall s, k,
\label{p1_c2}
\\ 
&\quad\quad 
\theta_{s,m}
\in 
\{\theta_{s,m} = e^{j\phi_{s,m}}|\phi_{s,m} \in [0, 2\pi)\},
\forall s, m,
\label{p1_c3}\\
&\quad\quad 
\boldsymbol{A}\in \mathcal{A},
\label{p1_c4}
\end{align}
\end{subequations} 
where $\omega_k$ is the weight assigned to the data rate of the $k$-th UE.
Constraint \eqref{p1_c1} ensures that the total transmit power at the AP does not exceed the maximum allowed power $p_{\max}$.
Constraint \eqref{p1_c2} guarantees that each UE receives a minimum signal power $p_{s}^{thr}$ from all RHS, ensuring uniform signal reception.
Constraint \eqref{p1_c3} restricts the phase shifts $\phi_{s,m}$ to continuous values.
Finally, constraint \eqref{p1_c4} ensures that $\boldsymbol{A}$ is chosen from a predefined set of RHS shape configurations optimized for signal propagation based on channel conditions.

\begin{proposition}\label{propos_1}
The objective function of the optimization problem $\text{P}_1$ can be reformulated into an equivalent form:
\begin{subequations} 
\begin{align} 
\text{P}_2&: 
\!\!\!
\mathop {\max}
\limits_{\boldsymbol{A},\mathbf{W},\boldsymbol{\Phi}_s,{\boldsymbol{\chi}}} 
\frac{1}{{\ln 2}}\!
\sum\limits_{k = 1}^K 
\omega_k 
\ln 
\left( 1 \!+\! \chi_k \right) \!-\! 
\omega_k\chi_k \!+\! 
\frac{
\omega_k
\left(1 + \chi_k \right)
\gamma _k}
{1 + \gamma_k}
\nonumber
\\
s.t.&:~
\quad 
\eqref{p1_c1}-\eqref{p1_c4}.
\end{align}
\end{subequations} 
Here, $\boldsymbol{\chi} = [\chi_1, \dots, \chi_K]^T$ is an auxiliary vector introduced via fractional transformation~\cite{10464825,9123410}. 
\end{proposition}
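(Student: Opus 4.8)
The plan is to recognize $\text{P}_2$ as the Lagrangian-dual (fractional-programming) reformulation of $\text{P}_1$ in the style of \cite{10464825,9123410}, and to verify that the inner maximization over the auxiliary vector $\boldsymbol{\chi}$ exactly recovers the original sum-log objective while leaving both the feasible set and the remaining optimization variables untouched. Concretely, I would argue that $\max_{\boldsymbol{\chi}}$ of the new objective, for any fixed $(\boldsymbol{A},\mathbf{W},\boldsymbol{\Phi}_s)$, equals $\sum_k \omega_k\log_2(1+\gamma_k)$.

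First I would isolate, for each user $k$, the per-user term $h_k(\chi_k) \triangleq \ln(1+\chi_k) - \chi_k + \tfrac{(1+\chi_k)\gamma_k}{1+\gamma_k}$, treating $\gamma_k$ (equivalently $\boldsymbol{A},\mathbf{W},\boldsymbol{\Phi}_s$) as fixed. Differentiation gives $h_k''(\chi_k) = -1/(1+\chi_k)^2 < 0$, so $h_k$ is strictly concave on $\chi_k > -1$; moreover $h_k(\chi_k)\to-\infty$ as $\chi_k\to -1^+$ and as $\chi_k\to +\infty$ (the linear $-\chi_k$ term dominates since $\gamma_k/(1+\gamma_k)<1$), so the maximizer is the unique stationary point and is attained.

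Next, setting $h_k'(\chi_k) = \tfrac{1}{1+\chi_k} - 1 + \tfrac{\gamma_k}{1+\gamma_k} = 0$ yields $\tfrac{1}{1+\chi_k} = \tfrac{1}{1+\gamma_k}$, i.e.\ $\chi_k^{\star} = \gamma_k \ge 0$. Substituting back, the cross term collapses: $h_k(\gamma_k) = \ln(1+\gamma_k) - \gamma_k + \gamma_k = \ln(1+\gamma_k)$, hence $\max_{\chi_k} h_k(\chi_k) = \ln(1+\gamma_k)$ for every feasible choice of the physical variables. Since the $\chi_k$ appear in none of the constraints \eqref{p1_c1}--\eqref{p1_c4} and are decoupled across users, I would then push the maximization over $\boldsymbol{\chi}$ inside the weighted sum to obtain $\max_{\boldsymbol{A},\mathbf{W},\boldsymbol{\Phi}_s,\boldsymbol{\chi}} \tfrac{1}{\ln 2}\sum_k \omega_k h_k(\chi_k) = \max_{\boldsymbol{A},\mathbf{W},\boldsymbol{\Phi}_s} \tfrac{1}{\ln 2}\sum_k \omega_k \ln(1+\gamma_k) = \max_{\boldsymbol{A},\mathbf{W},\boldsymbol{\Phi}_s}\sum_k \omega_k \log_2(1+\gamma_k)$, using $\log_2 x = \ln x/\ln 2$. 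This establishes that $\text{P}_1$ and $\text{P}_2$ have the same optimal value and that any optimizer of $\text{P}_2$, restricted to $(\boldsymbol{A},\mathbf{W},\boldsymbol{\Phi}_s)$, is optimal for $\text{P}_1$ (with $\chi_k^{\star}=\gamma_k$).

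The step requiring the most care is structural rather than algebraic: confirming that introducing $\boldsymbol{\chi}$ genuinely neither enlarges nor restricts the feasible region, so that the reformulation is an \emph{equivalence} and not merely a bound; this is exactly the decoupling observation above. It is worth closing with a remark on the payoff, namely that in $\text{P}_2$ the dependence on $\gamma_k$ now enters through the ratio $\tfrac{(1+\chi_k)\gamma_k}{1+\gamma_k}$, which is the form amenable to the subsequent quadratic-transform and alternating-optimization steps.
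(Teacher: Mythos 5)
Your proof is correct and follows essentially the same route as the paper's: fix the physical variables, use concavity of the per-user auxiliary term in $\chi_k$, obtain the stationary point $\chi_k^\star=\gamma_k$ from the first-order condition, and substitute back to recover $\omega_k\log_2(1+\gamma_k)$. You are in fact more complete than the paper's own argument, since you explicitly verify strict concavity, attainment of the maximizer, and the decoupling of $\boldsymbol{\chi}$ from the constraints.
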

\begin{proof}
Define the objective function of $\text{P}_2$ as $\Delta(\mathbf{W},\boldsymbol{\Phi}_s,\boldsymbol{\chi})$. This function is concave and differentiable with respect to $\chi_k$ when $\gamma_k$ is known. 
Taking the first-order derivative
$\frac{\partial \Delta(\mathbf{W},\boldsymbol{\Phi}_s,\boldsymbol{\chi})}
{\partial \chi_k} = 0$ yields the optimal value
$\hat{\chi}^k = \gamma_k$. 
By substituting $\hat{\boldsymbol{\chi}}$ into the objective function of $\text{P}_1$, we transform it into $\text{P}_2$, proving that both problems share the same optimal solution, establishing their equivalence. 
\hfill $\blacksquare$
\end{proof}
Building on \textit{proposition}~\ref{propos_1}, we propose an AO approach to efficiently solve the equivalent form of $\text{P}_1$. 
Since $\boldsymbol{\chi}$ can be directly obtained, its computation is straightforward.
Given $\boldsymbol{A}$, $\mathbf{W}$ and $\boldsymbol{\Phi}_s$,  the optimal values for $\chi_k$ can be determined using the first-order derivative of the objective function in $\text{P}_2$.
Thus, 
${\chi}_k^{(i)} = \gamma_k$, where $i$ is the iteration index. 
To tackle the remaining optimization variables, we introduce a three-step iterative algorithm, allowing independent tuning of RHS shape adaptation, AP beamforming, and RHS phase shifts.

\subsection{Step-one: Shape Adaptation}
\vspace{-0mm}
We first consider the shape adaptation for the RHS configuration in the optimization problem $\text{P}_2$. 
To maximize received signal power at the UE while satisfying design constraints, the RHS shape should be dynamically selected based on channel conditions. This selection/shape adaptation is formulated as: 
\begin{equation}
    \boldsymbol{A}=
    \underset{\boldsymbol{A} \in \mathcal{A}}{\text{argmax}}~~\sum\limits_{s \in \mathcal{S}} \sum\limits_{k \in \mathcal{K}} |\mathbf{g}_{s,k}^{\textrm{H}} 
    \left(\boldsymbol{A} \circ
    \mathbf{\Phi}_s^{\textrm{H}}\right)
    \mathbf{H}_s
    \mathbf{w}_k|^2.
    \label{shape_select}
\end{equation}
This shape selection strategy evaluates different RHS configurations based on effective channel gain, \eqref{shape_select}, and selects the one that maximizes received signal strength. By dynamically adapting the RHS shape, we achieve higher beamforming gain, broader spatial coverage, and improved interference suppression.
Moreover, since the number of available shapes in practical implementations is limited, this approach efficiently leverages real-time CSI to enhance network performance while keeping computational complexity manageable.

\subsection{Step-two: Optimizing Active Beamformers}
In this stage, we assume that the shapes, $\boldsymbol{A}$, the passive reflecting elements at the RHS, $\boldsymbol{\Phi}_s$ and  
$\boldsymbol{\chi}$
are fixed to optimize the active beamforming matrix $\mathbf{W}$. Therefore, $\text{P}_2$ can be rewritten as: 
\begin{subequations} 
\begin{align} 
\text{P}_3&: 
\mathop {\max}
\limits_{\mathbf{W}} 
\sum\limits_{k = 1}^K 
\frac{\omega _k\left( {1 + \chi _k} \right)\gamma _k}
{1 + \gamma _k}
\nonumber
\\
s.t.&:~
\quad 
\eqref{p1_c1} ~{\rm and}~ \eqref{p1_c2}.
\end{align}
\end{subequations} 
It is observed that $\text{P}_3$ represents a multi-ratio fractional programming form. Utilizing the quadratic transform method as described in~\cite{10464825,8310563}, we rewrite the optimization problem $\text{P}_3$ as:
\begin{subequations} 
\begin{align} 
\text{P}_4&: 
\mathop {\max}
\limits_{\mathbf{W},\boldsymbol{\tau}} 
\sum\limits_{k = 1}^K 
2\sqrt{{\omega _k}\left( {1 + {\chi _k}} \right)}
\Re 
\left\{
\tau^*_k \sqrt{\gamma _k}
\right\}
-
|\tau_k|^2(1 + \gamma _k)
\nonumber
\\
s.t.&:~
\quad 
\eqref{p1_c1},\eqref{p1_c2},
\end{align}
\end{subequations} 
where  
$\boldsymbol{\tau}$ is the collection of auxiliary variables $ \{\tau_1,\dots,\tau_K\}$. 
With fixed $\mathbf{W}$, the optimal $\tau_k$, denoted by $\tau_k^{opt}$, can be obtained by setting the derivative of the objection of $\text{P}_4$ with respect to $\tau_k$ to zero. Thus, we have  
$\tau_k^{opt} = 
\sqrt{{\omega _k}\left( {1 + {\chi _k}} \right)}
\sqrt{\gamma _k}/
{(1 + \gamma _k)}$.
Similarly, with a fixed $\tau_k^{opt}$ and now derivative with respect to $\mathbf{W}$, we obtain 
\begin{equation}
\mathbf{w}_k^{opt} = 
\sqrt{{\omega _k}(1 + {\chi _k})}
\tau_k
\left(
\mu
\mathbf{I}_{N_\textrm{tr}}+
\sum_{k=1}^{K}
|\tau_k|^2
\mathbf{b}_k
\mathbf{b}_k^{\textrm{H}}
\right)^{-1}
\!\!\!\!
\mathbf{b}_k
\end{equation}
where 
$\mathbf{b}^{\textrm{H}}_k=
\sum\limits_{s \in \mathcal{S}} 
\mathbf{g}_{s,k}^{\textrm{H}} 
\Phi_g^{\textrm{H}}
\mathbf{H}_s$
and 
$\mu \geq 0$ is the Lagrange multiplier.

\subsection{Step-three: Optimizing Passive Beamformers}
We now assume that the shapes, $\boldsymbol{A}$, the active beamformer matrix $\mathbf{W}$ and  
$\boldsymbol{\chi}$
are fixed to optimize the passive beamforming matrix at the RHS, i.e., $\boldsymbol{\Phi}_s$. 
For notational convenience, we first revisit the SNR equation in~\eqref{SNIR}, and simplify the numerator as: 
\vspace{-6mm}
\begin{align}
\\
\nonumber
\mathbf{f}_k^{\text{H}}\mathbf{w}_k 
\!=\!
\sum\limits_{s \in \mathcal{S}} 
\mathbf{g}_{s,k}^{\textrm{H}} 
\boldsymbol{\Phi}_s^{\textrm{H}}
\mathbf{H}_s
\mathbf{w}_k
\!=\!
\sqrt \eta 
\sum\limits_{s \in \mathcal{S}} \!\!
\boldsymbol{\Theta}_s^{\textrm{H}}
\operatorname{diag} 
\left( \mathbf{g}_{s,k}^{\textrm{H}}
\right)\!
\mathbf{H}_s
\mathbf{w}_k, \forall k,
\end{align}
where $\boldsymbol{\Theta}_s = [\theta_{s,1},\dots,\theta_{s,M}]^T$.
Thus, $\text{P}_2$ can be rewritten as: 
\begin{subequations} 
\begin{align} 
\text{P}_5&: 
\mathop {\max}
\limits_{\boldsymbol{\Theta}_s} 
\sum\limits_{k = 1}^K 
\frac{
{\omega _k\left( {1 + {\chi _k}} \right)
\Big| 
\sum\limits_{s \in \mathcal{S}}
\boldsymbol{\Theta}_s^{\textrm{H}}
\mathbf{v}_{s,k}
\Big|^2
}
}
{\Big|
\sum\limits_{s \in \mathcal{S}} 
\boldsymbol{\Theta}_s^{\textrm{H}}
\mathbf{v}_{s,k}\Big|^2
+
\sigma_{\xi}^2
\Big| {\sum\limits_{s \in \mathcal{S}} 
d_{s,k}
} 
\Big|^2},
\nonumber
\\
s.t.&:
\quad 
\frac{ 
\boldsymbol{\Theta}_s^{\textrm{H}}
\mathbf{v}_{s,k}
}
{d_{s,k}}
\geq p_{s}^{thr},
\: ~~~~~
\forall s, k,
\label{p5_c1}
\\ 
&\quad\quad 
\operatorname{diag}
(
\boldsymbol{\Theta}_s
\boldsymbol{\Theta}_s^{\textrm{H}}
)
=
\mathbf{1}_{M},
\forall s,
\label{p5_c2}
\\ 
&\quad\quad 
\operatorname{rank}
(
\boldsymbol{\Theta}_s
\boldsymbol{\Theta}_s^{\textrm{H}}
)
\leq
1,
\:~~\forall s,
\label{p5_c3}
\\ 
&\quad\quad 
\boldsymbol{\Theta}_s
\boldsymbol{\Theta}_s^{\textrm{H}}
\succeq
\mathbf{0}_M,
\:~~~~~~~~\forall s,
\label{p5_c4}
\end{align}
\end{subequations} 
where 
$\mathbf{v}_{s,k}=
\sqrt \eta 
\operatorname{diag} 
\left( \mathbf{g}_{s,k}^{\textrm{H}}
\right)\!
\mathbf{H}_s
\mathbf{w}_k$, 
and 
$d_{s,k}= 
\sqrt{||\boldsymbol{a}- \boldsymbol{q}[s]||^{\kappa}
||\boldsymbol{q}[s]-\boldsymbol{u}[k]||^{\kappa}},
\forall s, k,$ with  $\boldsymbol{a},~\boldsymbol{q}[s]$ and $\boldsymbol{u}[k]$ being the AP, the $s$-th RHS, and the $k$-th UE positions, respectively. The optimization problem $\text{P}_5$ is also a multi-ratio fractional programming problem.
Constraint \eqref{p5_c2} puts a limit on the unit-unit-modulus phase shifts.
Constraints \eqref{p5_c3} and \eqref{p5_c4} are imposed to ensure $\boldsymbol{\Theta}_s
\boldsymbol{\Theta}_s^{\textrm{H}}$ holds after optimization. 
Therefore, we exploit the quadratic transform method as in the previous stage to arrive at the following transformed problem:
\begin{subequations} 
\begin{align} 
\text{P}_6&: 
\mathop {\max}
\limits_{{\boldsymbol{\Theta}}_s,\widetilde{\boldsymbol{\Theta}},\boldsymbol{\varepsilon}} 
\sum\limits_{k = 1}^K 
2\sqrt{{\omega _k}
\left( {1 + {\chi _k}} \right)}
\Re 
\left\{
\varepsilon^*_k 
\operatorname{vec}
(\widetilde{\boldsymbol{\Theta}})^{\text{H}}
\operatorname{vec}
(\widetilde{\mathbf{V}}_{k})
\right\}
\nonumber
\\
&
~~~~~~~~
-
|\varepsilon_k|^2
\left(
\left(
\operatorname{vec}
(\widetilde{\boldsymbol{\Theta}})^{\text{H}}
\operatorname{vec}
(\widetilde{\mathbf{V}}_{k})
\right)^2
+
\sigma_{\xi}^2
\Big| {\sum\limits_{s \in \mathcal{S}} 
d_{s,k}
} 
\Big|^2
\right)
\nonumber
\\
s.t.&:
\quad 
\eqref{p5_c1}-\eqref{p5_c4},
\end{align}
\end{subequations} 
where $\widetilde{\boldsymbol{\Theta}}=\big[\boldsymbol{\Theta}_1\big|\dots\big|\boldsymbol{\Theta}_S\big]$, 
$\widetilde{\mathbf{V}}_{k}=\big[\mathbf{v}_{1,k}\big|\dots\big|\mathbf{v}_{S,k}\big]$,
are constructed to simplify notations
and 
$\boldsymbol{\varepsilon}
= [\varepsilon_1,\dots,\varepsilon_K]^T$ is the auxiliary vector introduced by the quadratic transformation.  
The auxiliar variables can be updated based on 
$\varepsilon_k=
\frac{
\sqrt{\omega_k(1 + \chi _k)}
\left(
\operatorname{vec}
(\widetilde{\boldsymbol{\Theta}})^{\text{H}}
\operatorname{vec}
(\widetilde{\mathbf{V}}_{k})
\right)^2
}
{
\left(
\operatorname{vec}
(\widetilde{\boldsymbol{\Theta}})^{\text{H}}
\operatorname{vec}
(\widetilde{\mathbf{V}}_{k})
\right)^2
+
\sigma_{\xi}^2
|{\sum\limits_{s \in \mathcal{S}} 
d_{s,k}
}|^{^2}
}.
$
Finally, 
given $\boldsymbol{\varepsilon}$, the optimization of the passive beamformers can be reformulated as: 
\begin{subequations} 
\begin{align} 
\text{P}_7&: 
\mathop {\max}
\limits_{\boldsymbol{\Theta}_s,\widetilde{\boldsymbol{\Theta}}} 
-
\operatorname{vec}
(\widetilde{\boldsymbol{\Theta}})^{\text{H}}
\boldsymbol{\Upsilon}
\operatorname{vec}
(\widetilde{\boldsymbol{\Theta}})
+ 2\Re 
\left\{ 
\operatorname{vec}
(\widetilde{\boldsymbol{\Theta}})^{\text{H}}
\mathbf{z}
\right\}
\nonumber
\\
s.t.&:~
\quad 
\eqref{p5_c1},\eqref{p5_c2},\eqref{p5_c4},
\end{align}
\end{subequations} 
where 
$\boldsymbol{\Upsilon}
=
\sum_{k = 1}^K 
|\varepsilon_k|^2
\operatorname{vec}
(\widetilde{\boldsymbol{\Theta}})
\operatorname{vec}
(\widetilde{\mathbf{V}}_{k})^{\text{H}}
$
and 
$\mathbf{z}
=
\sum_{k = 1}^K 
\varepsilon_k^*
\sqrt{\omega_k(1 + \chi _k)}
\operatorname{vec}
(\widetilde{\mathbf{V}}_{k}).
$
The optimization problem $\text{P}_7$
is a quadratically constrained quadratic programming problem where the rank-one phase shift constraint~\eqref{p5_c3} is dropped to relax the problem. 
However, the solution may not be a rank-one matrix. To enforce the rank-one constraint, one can adopt a penalty-based method as described in~\cite{10464825}, yielding $\text{P}_7$ to be a convex problem that can be solved efficiently using CVX.


\section{Simulation Results}
In this study, all UE weights $\omega_k$ are set to uniform values for consistency across simulations. Based on the modeling approach in \cite{10074428}, the channel gain follows a complex Gaussian distribution, $\mathcal{CN}(0, 10^{-0.1 PL(r)})$. The path loss is defined as $PL(r) = \rho_a + 10 \rho_b \log(r) + \delta$, where $\delta$ is a Gaussian random variable with $\delta \sim \mathcal{N}(0, \sigma^2_{\delta})$. The system parameters are configured as follows: $\sigma_{\xi} = -85~\rm{[dBm]}$, 
$\rho_a = 61.4$, $\rho_b = 2$, and $\sigma_{\delta} = 5.8\rm{[dB]}$. The AP, consisting of $N_{\textrm{tr}} = 32$ antennas, is positioned at the origin, while UEs are randomly scattered within a circular region centered at $(40~\text{m}, 0~\text{m})$ with a radius of $10$ m. Additionally, we deploy $\mathcal{S} = 4$ RHS units at fixed locations, 
each having $M_x=M_y=M/200=200$ elements and two possible shapes ($|\mathcal{A}|^2=2$, $10\times6$ square-shaped and $60\times1$ ULA-shaped RHS). The total transmission power is $p_{\max} = 30$ $\rm{[dBm]}$ with $p_{s}^{thr}=20$ $\rm{[dBm]}$ \cite{10074428}.

\begin{figure}[t]
\centering
\begin{tikzpicture}
\begin{axis}[
    width=0.975\columnwidth,
    height=0.385\textwidth,
    xlabel={$p_{\max}$ [$\rm{dBm}$]},
    ylabel={Throughput [$\rm{bits/sec/Hz}$]},
    grid=major,
    legend style={
        at={(0.60,0.57)}, 
        anchor=south east, 
        font=\scriptsize,
        inner sep=0.5mm, 
        legend cell align={left},
        legend columns=1,
        /tikz/column 1/.style={
            column sep=-0pt,
        },
        /tikz/column 1/.style={
            column sep=0pt,
        },
        /tikz/row 1/.style={
            row sep=-0pt,
        },
        /tikz/row 1/.style={
            row sep=-0pt, 
        },
        /tikz/row 1/.style={
            row sep=-0pt,
        },
    },
    tick label style={font=\small},
    xlabel style={font=\footnotesize, yshift=1.35mm},
    ylabel style={font=\footnotesize, yshift=-1.5mm},
    xmin=20, xmax=45,
    xtick={20,25,30,35,40,45},
    ymin=4, ymax=18,
    ytick={4,6,8,10,12,14,16,18},
    legend entries={
        Proposed Shape-adaptive method,
        Fixed {\tikz \draw (0,0.01) rectangle (-3.1mm,1mm);}-Shaped RHS,
        ~2-bit $\Box$-Shaped RHS,
        ~2-bit {\tikz \draw (0,0.01) rectangle (-3.1mm,1mm);}-Shaped RHS,
        $~~~$ZF $\Box$-Shaped RHS,
        $~~~$ZF {\tikz \draw (0,0.01) rectangle (-3.1mm,1mm);}-Shaped RHS
    }
]

\addplot[mark=*, solid, black, line width=0.7pt] coordinates {
    (20, 6.61) (25, 7.88) (30, 9.61) (35, 11.71) (40, 14.45) (45, 17.88)
};
\addplot[mark=+, solid, blue, line width=0.7pt] coordinates {
    (20, 6.21) (25, 7.50) (30, 9.13) (35, 11.12) (40, 13.65) (45, 17.02) 
};
\addplot[mark=square*, dashdotted, red, line width=0.7pt] 
coordinates {
    (20, 5.61) (25, 6.88) (30, 8.59) (35, 10.8) (40, 12.4) (45, 14.21)
};
\addplot[mark=+, dashdotted, darkgreen, line width=0.7pt] 
coordinates {
    (20, 4.99) (25, 6.24) (30, 7.97) (35, 9.89) (40, 11.78) (45, 13.59)
};
\addplot[mark=square*, solid, lightpurple, line width=0.7pt] coordinates {
    (20, 4.69) (25, 5.37) (30, 6.11) (35, 7.03) (40, 8.39) (45, 9.91)
};
\addplot[mark=+, solid, cyan, line width=0.7pt] coordinates {
    (20, 4.51) (25, 5.12) (30, 5.90) (35, 6.81) (40, 7.92) (45, 9.22)
};


\end{axis}
\end{tikzpicture}
\vspace{-3mm}
\caption{Throughput vs. ${p}_{\max}$.}
\vspace{-5mm}
\label{fig2}
\end{figure}
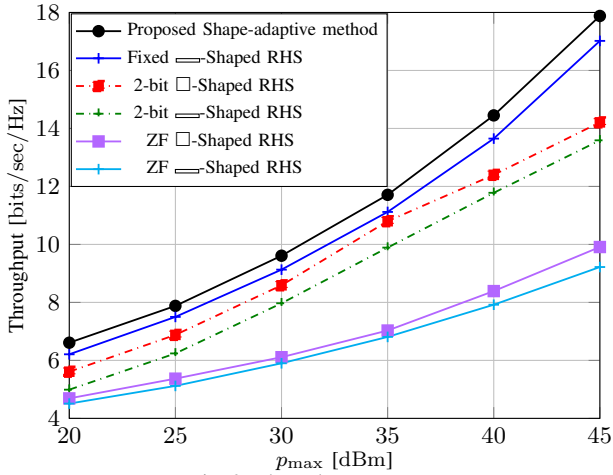

{Fig.~\ref{fig2} compares the throughput performance of the proposed shape-adaptive algorithm against three baselines:
$(i)$ the proposed algorithm applied to fixed RHS shapes (i.e., without shape adaptation),
$(ii)$ fixed-shape designs with 2-bit quantized phase shifts, and
$(iii)$ a zero-forcing (ZF) precoding scheme with randomly selected RHS phase shifts.
The shape-adaptive method achieves the highest throughput across all transmit power levels, confirming the benefit of optimizing both the RHS shape and active/passive beamformers.
The second-best performance comes from applying the same optimization algorithm to fixed RHS shapes, using ideal (continuous) phase shifts. While this setup does not benefit from shape adaptation, it outperforms the quantized and random-phase baselines due to optimized beamforming. Introducing 2-bit phase quantization leads to a noticeable drop in throughput for both RHS shapes. In this case, the square-shaped RHS outperforms the ULA-shaped RHS, owing to its larger reflective surface and broader angular coverage, which better supports multi-user signal delivery.
The ZF baseline results in the lowest throughput, underscoring the importance of intelligent RHS phase shift design and optimization.} \vspace{-0.1em}

\section{Conclusion}
In this work, we have proposed a shape-adaptive RHS deployment strategy, addressing limitations in existing RIS optimization methods. Unlike conventional approaches that focus on RIS selection, placement, or element activation, our method dynamically adjusts the shape of utilized RHS elements from a subset of predefined shapes to maximize network throughput.
We formulated a non-convex optimization problem and introduced an AO framework to optimize RHS shape adaptation, AP beamforming, and RHS phase shifts. Numerical results confirmed that shape adaptation enhances throughput, improves network robustness, and mitigates signal blockages.
We established RHS shape adaptation as a key design factor for future wireless networks. \vspace{-0.3em}

\bibliographystyle{ieeetr}
\bibliography{ref}

\end{document}